%
%
%

\documentclass[11pt]{article}

\usepackage[margin=1in]{geometry}
\usepackage{mathtools}
\usepackage{mathptmx}
\usepackage{helvet}
\usepackage{courier}
\usepackage{type1cm}

\usepackage{makeidx}         
\usepackage{graphicx}        
\usepackage{multicol}        
\usepackage[bottom]{footmisc}

\usepackage{graphicx}
\usepackage{subcaption}
\usepackage[space]{grffile}
\usepackage{latexsym}
\usepackage{amsfonts,amsmath,amssymb}
\usepackage{url}

\usepackage[utf8]{inputenc}

\usepackage{fancyref}
\usepackage{hyperref}
\hypersetup{colorlinks=false,pdfborder={0 0 0},}
\usepackage{textcomp}
\usepackage{longtable}
\usepackage{multirow,booktabs}

\usepackage{algorithm}
\usepackage[noend]{algpseudocode}
\usepackage{algorithmicx}
\algnewcommand\True{\textbf{true}\space}
\algnewcommand\False{\textbf{false}\space}
\algnewcommand\Null{\textbf{null}\space}
\algnewcommand\AND{\textbf{and}\space}
\algnewcommand\OR{\textbf{or}\space}
\algnewcommand\NOT{\textbf{not}\space}
\algnewcommand\Continue{\textbf{continue}\space}
\algnewcommand\PlusEquals{$\mathrel{+}=$\space}
\algnewcommand\MinusEquals{$\mathrel{-}=$\space}

\usepackage{framed}
\usepackage[square,numbers]{natbib}
\usepackage{amsthm}

\usepackage{verbatim} 

\makeindex             

\newtheorem{lemma}{Lemma}

\DeclarePairedDelimiter\ceil{\lceil}{\rceil}
\DeclarePairedDelimiter\floor{\lfloor}{\rfloor}

\DeclareUnicodeCharacter{00A0}{~}

\newcommand{\Ex}{{\mathbf{Ex}}}
\renewcommand{\Pr}{{\mathbf{Pr}}}

\begin{document}

\date{}
\author{Donggu Kang\\MIT\\donggu@mit.edu \and James Payor\\MIT\\payor@mit.edu}
\title{Flow Rounding}
\maketitle

\begin{abstract}

We consider \emph{flow rounding}: finding an integral flow from a fractional flow.
\emph{Costed flow rounding} asks that we find an integral flow with no worse cost.
\emph{Randomized flow rounding} requires we randomly find an integral flow such that the expected flow along each edge matches the fractional flow. Both problems are reduced to \emph{cycle canceling}, for which we develop an $O(m \log \frac {n^2} {m})$ algorithm.

\end{abstract}

\setcounter{tocdepth}{1}

\section{Introduction}
\label{sec:intro}

Many modern network flow algorithms give solutions with fractional flow values, but often we're interested in an integral assignment of flow.
For any costed fractional flow with integral capacities, we can always change it to an integral flow with the same (or better) flow value and no worse cost. Further, we can always randomly find an integral solution so that the expected flow on each edge matches the fractional solution.

Given a fractional solution as a starting point, we refer to finding an integral flow with no worse cost as \emph{costed flow rounding}, and \emph{preserving} the expected amount of flow along each edge as \emph{randomized flow rounding}. Several authors have given $O(m \log m)$ algorithms for costed flow rounding - notably a scaling algorithm that has $O(\log m)$ parallel runtime \cite{Cohen1995}, approaches for rounding max-flow solutions using just random walks \cite{Lee2013} \cite{Madry2013}.
However, beyond a special case in \cite{Raghavan1987} where we round a unit acyclic flow, only the $O(m^2)$ path decomposition of \cite{Raghavan1985} is known for randomized flow rounding.

Our approach relies on an observation about edges with fractional flow: in a \emph{circulation}, every fractional edge must be in a cycle of fractional edges. Formalized in section \ref{sec:rounding}, this means we can repeatedly find and cancel \emph{fractional cycles} to yield an integral circulation.

When canceling a given cycle, we can push flow in either direction. In costed flow rounding, at least one direction won't increase overall cost. To approach randomized flow rounding, section \ref{sec:rounding} will show that simple random choices of direction preserve the expected flow along each edge.

For the \emph{cycle cancelling} problem, our main results are a practical $O(n^2)$ algorithm, and an $O(m \log \frac {n^2} {m})$ algorithm that smoothly combines the $O(n^2)$ approach with an $O(m \log n)$ dynamic tree solution given in \cite{Sleator1981}. This answers the conjecture of Goldberg and Tarjan \cite{Goldberg1989}, improving the speed of their minimum-mean cycle-cancelling algorithm and other rounding approaches involving cycle cancelling such as \cite{Fleischer00}.

We begin with brief background on the dynamic tree data structure of \cite{Sleator1981} and flows, followed by the reductions of flow rounding to cycle canceling, and then present each cycle canceling algorithm in turn.

\newpage
\section{Background}

\subsection{Dynamic trees}
\label{sec:linkcut}

Both the $O(m \log n)$ and $O(m \log \frac {n^2} {m})$ algorithms presented in this paper utilize the dynamic trees data structure of \cite{Sleator1981}. The original paper can provide greater detail, but we recall the relevant points.

The data structure allows the maintenance of a dynamic forest of rooted trees over a set of nodes, and tree-path operations over the forest. Each relevant operation takes time logarithmic size of the trees involved. Generally, this means we assume $O(\log n)$ time per operation, but when we restrict the size of each tree to $k$ in section \ref{sec:mlogn2m} this will guarantee operations take $O(\log k)$ time.

Specific supported operations we will need are:
\begin{itemize}
	\item $Link(u,v)$: Create an edge between $u$ and $v$, making $v$ a child of $u$.
	\item $Cut(u,v)$: Remove the edge between $u$ and $v$.
	\item $FindRoot(v)$: Find the root of a tree which $v$ belongs to.

	\item $PathAdd(u,v,c)$: Add a number $c$ to the weight of every edge along the $u$-$v$ path.
	\item $PathMin(u,v)$: Report the edge with minimum weight over the $u$-$v$ path. Break ties by reporting the edge closest to $u$.
	\item $PathSum(u,v)$: Report the sum of edge weights over the $u$-$v$ path.
\end{itemize}

\subsection{Flows and circulations}

For the remainder of the paper, we will focus on \emph{circulations}, flows in which no node has excesses and deficits. This is justified by reducing other cases to circulations.

In the case of costed flow rounding, we will be given a fractional circulation over a costed graph, and attempt to find an integral circulation with no worse cost. Note that this allows us to round max-flow and min-cost max-flow solutions as well: we connect the sink to the source with an edge that has cost $-\infty$ to find a circulation, and this edge guarantees we won't decrease the source-sink flow.

For randomized flow rounding, to round a flow we can again link the sink to the source with an edge $e$ that has flow value $F$, creating a circulation. After rounding the circulation, the expected flow along $e$ will be $F$, so the source-sink flow is preserved in expectation. Further, the algorithms in this paper change the flow along each edge by at most $1$ unit, so after rounding the source-sink flow will be between $\floor F$ and $\ceil F$.

\newpage
\section{Flow Rounding using Cycle Cancelling}
\label{sec:rounding}

\subsection{Fractional Cycles}

The following key lemmas motivate solving flow rounding using cycle cancelling. We assume that we are working with a graph with integral capacities throughout.

\begin{lemma}
	\label{lemma:forest}
	If a circulation $f$ has the property that the subgraph of all fractional edges forms a forest, then $f$ is an integral circulation.
\end{lemma}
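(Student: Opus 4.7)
The plan is to argue by contradiction: assume the fractional subgraph is non-empty, find a vertex that has exactly one incident fractional edge, and derive a violation of flow conservation there.

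First I would observe that if the fractional subgraph is non-empty (i.e.\ has at least one edge), then since it is a forest, it contains a leaf vertex $v$ --- that is, a vertex incident to exactly one edge of the fractional subgraph. This is the standard fact that any non-empty forest has a vertex of degree $1$; it holds because each tree component with at least one edge has at least two leaves.

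Next I would use flow conservation at $v$. Let $e$ be the unique fractional edge incident to $v$, and let $S$ be the net flow into $v$ contributed by all other edges incident to $v$. By assumption each of these other edges carries integral flow, so $S \in \mathbb{Z}$. Writing conservation at $v$ as "(signed contribution of $e$) $+ S = 0$", the signed contribution of $e$ must be an integer. But $f(e)$ is fractional, and the signed contribution is just $\pm f(e)$, which is therefore not an integer. This is the desired contradiction.

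The main step is really just identifying a leaf and applying conservation there; there is no serious obstacle. The one point I would be careful about is the definition of the fractional subgraph: it consists of the edges with fractional flow together with their endpoints, so "leaf" means degree one within this edge set, which is exactly what is needed to isolate a single fractional term in the conservation equation at $v$. Concluding, the fractional subgraph must be empty, so $f$ is integral.
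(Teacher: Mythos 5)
Your proof is correct and takes essentially the same approach as the paper: find a leaf in the non-empty fractional forest and observe that flow conservation at that leaf is violated because exactly one incident edge carries fractional flow. You simply spell out the leaf-existence fact and the conservation equation in slightly more detail than the paper's terse version.
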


\begin{proof}
	Suppose not. Then there must be a leaf node $v$ in the forest with only one connected edge that has fractional flow. The net flow into $v$ cannot be zero, as its remaining edges have integral flow, which violates the net flow condition on a circulation.
\end{proof}

\begin{lemma}
	\label{lemma:nocycle}
	If a circulation $f$ contains no cycle of edges with fractional flow, then $f$ is integral.
\end{lemma}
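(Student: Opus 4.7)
The plan is to obtain Lemma~\ref{lemma:nocycle} as an essentially immediate corollary of Lemma~\ref{lemma:forest}. The first step is to pin down that ``cycle of edges with fractional flow'' refers to an \emph{undirected} cycle in the subgraph $H$ induced by the fractional edges of $f$. This is the natural reading in the cycle-cancellation setting, since we are free to push flow in either direction around such a cycle via the residual graph; under a directed reading the lemma would in fact be false, as witnessed by a triangle $a{\to}b$, $a{\to}c$, $b{\to}c$ each carrying flow $1/2$ and balanced by an integral edge $c{\to}a$ of flow $1$, which contains no directed cycle of fractional edges yet is not integral.

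Once the undirected interpretation is fixed, the argument is essentially definitional: an undirected graph with no cycle is by definition a forest, so the hypothesis tells us that $H$ is a forest. Applying Lemma~\ref{lemma:forest} to $f$ then yields that $f$ is integral.

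The main (and only) obstacle is really just the bookkeeping around ``cycle''; the lemma adds no mathematical content over Lemma~\ref{lemma:forest} beyond restating ``acyclic'' as ``contains no cycle,'' so I would lead with the clarification of the intended notion and then invoke the previous lemma in a single sentence.
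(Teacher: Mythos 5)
Your proof is correct and matches the paper's proof exactly: a graph with no cycles is a forest, so Lemma~\ref{lemma:forest} applies. The clarification that ``cycle'' must mean an undirected cycle in the fractional-edge subgraph is a sensible observation, though the paper leaves it implicit.
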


\begin{proof}
	A graph without cycles is a forest, therefore by Lemma \ref{lemma:forest} $f$ is an integral circulation.
\end{proof}

Now, suppose we begin with a fractional circulation $f$, from which we will construct a new circulation $f'$ (initially equal to $f$). If we cancel all fractional cycles in $f'$, then $f'$ must be integral. To cancel a given cycle, we can push flow around it until the flow across an edge becomes integral.

To capture the amount of flow required to make an edge integral, we define the \emph{availability} of a directed edge $(u,v)$ as:
$$ availability(u,v) = \ceil{f(u,v)} - f'(u,v) $$
We can push exactly $availability(u,v)$ units of flow along $(u,v)$ before $f'(u,v)$ becomes integral. This definition applies for both directions along the edge. For example, if $f'(u,v) = f(u,v) = 1.7$, then $availability(u,v) = 0.3$ and $availability(v,u) = (-1) - (-1.7) = 0.7$, indicating we can push $0.3$ units of flow from $u$ to $v$ or $0.7$ units of flow from $v$ to $u$ before the flow along the edge becomes integral. Similarly, we define the availability of a directed path (or directed cycle) to be the \emph{minimum availability} of edges along the path. This is the most flow we can push along a path before an edge becomes integral.

With the concept of availability, the next lemma guarantees that we can cancel cycles without violating capacity constraints.

\begin{lemma}
	\label{lemma:availability}
	If an edge $(u,v)$ satisfies $availibility(u,v) \geq 0$ and $availability(v,u) \geq 0$, then the new circulation $f'$ obeys the capacity constraints of the edge.
\end{lemma}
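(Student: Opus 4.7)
The plan is to unpack the two availability inequalities into bounds that sandwich $f'(u,v)$ between the floor and ceiling of $f(u,v)$, and then appeal to the integrality of the capacities to conclude that these bounds stay inside the feasible interval.

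First I would rewrite $availability(u,v) \geq 0$ directly from the definition as $f'(u,v) \leq \ceil{f(u,v)}$. For the reverse direction, I would apply the definition to the oppositely oriented edge: $availability(v,u) = \ceil{f(v,u)} - f'(v,u) = \ceil{-f(u,v)} - (-f'(u,v)) = -\floor{f(u,v)} + f'(u,v) \geq 0$, which rearranges to $f'(u,v) \geq \floor{f(u,v)}$. Together these give the key sandwich
\[ \floor{f(u,v)} \;\leq\; f'(u,v) \;\leq\; \ceil{f(u,v)}. \]

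Next I would invoke the standing assumption that the graph has integral capacities and that $f$ is a feasible circulation, so $0 \leq f(u,v) \leq c(u,v)$ with $c(u,v)$ an integer. Integrality of $0$ and $c(u,v)$ immediately lifts to $0 \leq \floor{f(u,v)}$ and $\ceil{f(u,v)} \leq c(u,v)$. Chaining these with the sandwich above yields $0 \leq f'(u,v) \leq c(u,v)$, which is exactly the capacity constraint for the edge under $f'$.

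There is no real obstacle here; the only subtlety is being careful about the sign convention for the reverse edge (namely $f(v,u) = -f(u,v)$ and hence $\ceil{-x} = -\floor{x}$), which is what turns the second availability condition into the desired lower bound. Once that identity is in hand, the proof is a one-line chain of inequalities using integrality of capacities.
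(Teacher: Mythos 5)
Your proposal is correct and matches the paper's argument essentially step for step: both proofs rewrite the two availability conditions, use antisymmetry ($f(v,u) = -f(u,v)$, hence $\ceil{-x} = -\floor{x}$) to sandwich $f'(u,v)$ between $\floor{f(u,v)}$ and $\ceil{f(u,v)}$, and then invoke integrality of the capacities to conclude feasibility. Your version is slightly more explicit about the ceiling-of-negation identity and the integrality step; the only minor overspecification is assuming the capacity interval is $[0, c(u,v)]$ rather than the general integral bounds the paper's antisymmetric flow convention permits, but the argument is unaffected.
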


\begin{proof}
	If so, we have $\ceil{f(u,v)} - f'(u,v) \geq 0$ and $\ceil{f(v,u)} - f'(v,u) \geq 0$. Using antisymmetry, this gives us $\ceil{f(u,v)} \geq f'(u,v)$ and $\floor{f(u,v)} \leq f(u,v)$.
	As the capacity of $(u,v)$ is an integer, any flow value between $\ceil{f(u,v)}$ and $\floor{f(u,v)}$ must satisfy its capacity constraint, given that $f(u,v)$ does.  So $f'$ obeys the capacity constraint.
\end{proof}

If we only ever push flow along paths that is equal to the availability of the path, no edge can have its availability drop below zero. So the cycle cancelling method just described will yield an integral circulation $f'$ that obeys the capacity constraints.

\subsection{Costed Flow Rounding}

In the problem of \emph{costed flow rounding}, edges have associated costs per unit flow. Given a fractional circulation $f$, we want to find an integral circulation $f'$ that has no worse cost than $f$. This is achieved by exploiting the choice of direction we can make when cancelling a cycle: we can make an edge on the cycle integral by pushing flow in either direction, so we can choose the direction that yields better cost. As cost is antisymmetric, either one direction has positive cost and the other has negative cost, or both directions have zero cost. In either case, we can always cancel the cycle without increasing the cost of the circulation.

\subsection{Randomized Flow Rounding}

In our characterization of the \emph{randomized flow rounding} problem, we are given a fractional circulation $f$ with source $s$ and sink $t$, and wish to randomly find an integral circulation $\mathbf f'$ such that for every edge $(u,v)$, $\Ex[\mathbf f'(u,v)] = f(u,v)$ \footnote{$\Ex[\mathbf X]$ will denote the expected value of a random variable $\mathbf X$, and $\Pr[E]$ will denote the probability of an event $E$. Random variables are bolded.}. We say that a procedure \emph{preserves flow in expectation} if this is true.

One known algorithm for randomized flow rounding is the \emph{path stripping} algorithm originally proposed by \cite{Raghavan1985}. Their analysis of using path stripping to approximate integer multicommodity flow problems holds true for any algorithm that solves the randomized flow rounding problem, as it relies solely on the fact that the flow is preserved in expectation. The path stripping algorithm runs in $O(m^2)$ time, and we can improve that with another adaptation of cycle cancelling. The idea will be to randomly choose the direction in which to cancel each fractional cycle, such that the flow is preserved in expectation. First, we show that we can compose operations whilst preserving the flow in expectation.

\begin{lemma}
	\label{lemma:expectation}
	Let $\mathbf f$ be a random variable giving an initial flow. If $\rho$ is a procedure that preserves flow in expectation, then $\Ex[\rho(\mathbf f)] = \Ex[\mathbf f]$.
\end{lemma}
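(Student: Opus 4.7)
The plan is to prove this edge-by-edge via the law of total expectation (tower property). Since $\rho(\mathbf f)$ and $\mathbf f$ are flows on the same graph, it suffices to show, for each directed edge $(u,v)$, that $\Ex[\rho(\mathbf f)(u,v)] = \Ex[\mathbf f(u,v)]$; the statement of the lemma then follows by linearity applied componentwise.

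First, I would unpack what ``$\rho$ preserves flow in expectation'' means as a conditional statement: for any \emph{fixed} input flow $f$, the (random) output satisfies $\Ex[\rho(f)(u,v)] = f(u,v)$. Reading this conditionally on the event $\{\mathbf f = f\}$, and noting that the internal randomness of $\rho$ is independent of whatever randomness produces $\mathbf f$, this gives
\[
\Ex[\rho(\mathbf f)(u,v) \mid \mathbf f = f] \;=\; f(u,v),
\]
or equivalently $\Ex[\rho(\mathbf f)(u,v) \mid \mathbf f] = \mathbf f(u,v)$ as random variables.

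Next, I would apply the tower property:
\[
\Ex[\rho(\mathbf f)(u,v)] \;=\; \Ex\bigl[\Ex[\rho(\mathbf f)(u,v) \mid \mathbf f]\bigr] \;=\; \Ex[\mathbf f(u,v)],
\]
which is exactly what we need. Summing (or rather, asserting) this across all edges gives $\Ex[\rho(\mathbf f)] = \Ex[\mathbf f]$.

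There is no real obstacle here; the only subtle point worth flagging explicitly is the independence between the randomness internal to $\rho$ and the random input $\mathbf f$, which is what licenses rewriting the hypothesis ``$\Ex[\rho(f)(u,v)] = f(u,v)$ for each fixed $f$'' as the conditional expectation identity above. Once that is stated cleanly, the argument is a one-line application of iterated expectation.
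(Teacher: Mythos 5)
Your proof is essentially the same as the paper's: both condition on the value of the random input flow, invoke the hypothesis that $\rho$ preserves flow in expectation for fixed inputs, and finish by the law of total expectation (the paper writes this out as a sum over $\Pr[\mathbf{f}=g]$, which is your tower property). You do add something the paper leaves implicit — explicitly flagging that $\rho$'s internal randomness must be independent of $\mathbf{f}$ in order to rewrite the fixed-flow hypothesis as a conditional expectation — which is a small but worthwhile clarification.
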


\begin{proof}
	Because $\rho$ preserves flow in expectation, $\Ex[\rho(g) - g] = 0$ for a fixed flow $g$. Then,
	\begin{align*}
		\Ex[\rho(\mathbf f) - \mathbf f]
			& = \sum_g \Ex[\rho(\mathbf f) - \mathbf f | \mathbf f = g] \Pr[\mathbf f = g] \\
			& = \sum_g \Ex[\rho(g) - g] \Pr[\mathbf f = g] \\
			& = 0
	\end{align*}
\end{proof}

\begin{lemma}
	\label{lemma:preservation}
	If $\rho_1, \rho_2, \hdots, \rho_n$ are procedures that preserve flow in expectation, then so does their composition.
\end{lemma}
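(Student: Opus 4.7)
The plan is a short induction on $n$, using Lemma \ref{lemma:expectation} as the workhorse. The base case $n=1$ is immediate since $\rho_1$ is assumed to preserve flow in expectation. For the inductive step, I would assume that $\sigma = \rho_{n-1} \circ \cdots \circ \rho_1$ preserves flow in expectation, and show that $\rho_n \circ \sigma$ does as well.

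To verify this, I would fix an arbitrary (deterministic) input flow $f$ and let $\mathbf{g} = \sigma(f)$, so that by the inductive hypothesis $\Ex[\mathbf{g}(u,v)] = f(u,v)$ on every edge. The output of the full composition on $f$ is then $\rho_n(\mathbf{g})$, which is $\rho_n$ applied to a \emph{random} input. This is precisely the situation handled by Lemma \ref{lemma:expectation}: since $\rho_n$ preserves flow in expectation, applying it to the random flow $\mathbf{g}$ gives $\Ex[\rho_n(\mathbf{g})] = \Ex[\mathbf{g}] = f$ edgewise, which is exactly what it means for $\rho_n \circ \sigma$ to preserve flow in expectation on input $f$.

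There isn't really a main obstacle here, since Lemma \ref{lemma:expectation} already did the hard conceptual work of extending the ``preserves flow in expectation'' property from deterministic inputs to random inputs; the composition lemma is just the observation that this extension is exactly what one needs to chain procedures together. The only small care point is making sure the definition is quantified correctly: ``preserves flow in expectation'' is a statement about every deterministic input, so in the inductive step I must fix $f$ first and then reason about the randomness introduced by $\sigma$ and $\rho_n$ in turn, rather than starting with a random $\mathbf{f}$.
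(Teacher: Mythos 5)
Your proof is correct and is essentially the same argument as the paper's: both chain $\Ex[\mathbf f_{i+1}] = \Ex[\mathbf f_i]$ via Lemma \ref{lemma:expectation}, with your version simply packaging the chaining as an explicit induction on $n$.
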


\begin{proof}
	Let $f_0$ be the original flow, $\mathbf f_1 = \rho_1(f_0), \mathbf f_2 = \rho_2(\mathbf f_1), \hdots, \mathbf f_n = \rho_n(\mathbf f_{n-1})$. \\
	By Lemma \ref{lemma:expectation}, $\Ex[\mathbf f_{i+1}] = \Ex[\mathbf f_i]$, so $\Ex[\mathbf f_n] = f_0$.
\end{proof}

Now, if we can show how to cancel a cycle whilst preserving flow in expectation, any sequence of these will preserve flow in expectation. Suppose that a given cycle has availability $a$ forward and $b$ backward. That is, if we cancel it by pushing flow forward we will add $a$ units of flow to every edge, and pushing flow backward will subtract $b$ units. Fix a particular edge on the cycle, initially with flow $x$, and with final flow $\mathbf X$. If we cancel forward with probability $p$, then $\Ex[\mathbf X] = x + pa - (1-p)b = x + p(a + b) - b$. So, choosing $p = \frac {b} {a + b}$, $\Ex[\mathbf X] = x$. Note that $p$ is independent of $x$: we can choose $p$ independently of the flow on the particular edges of the cycle, and every edge will have its value preserved in expectation. So by Lemma \ref{lemma:preservation}, we can repeatedly cancel cycles in this way to obtain an integral flow whilst preserving the flow in expectation, solving the randomized rounding problem.

As such, our $O(m \log \frac{n^2}{m})$ cycle cancelling algorithm directly improves on the $O(m^2)$ path stripping approach for randomized rounding, with applications in global routing \cite{Raghavan1985}.

\newpage
\section{Rounding in \texorpdfstring{$O(m \log n)$}{O(m log n)}}
\label{sec:mlogn}

In this section, we present our appropriation of Sleator and Tarjan's algorithm for making a flow acyclic \cite{Sleator1981} into an $O(m \log n)$ algorithm for cycle cancelling. We present the algorithm in the context of rounding a costed circulation, but any method for cancelling cycles that only needs path aggregation can be handled. For flow rounding, the algorithm adds fractional edges one by one and cancels cycles as they occur.

Initially we are given a fractional circulation $f$ over a costed graph $G = (V, E)$, and initialize our new circulation as $f' = f$.
We will build a graph $G' = (V, E')$ incrementally, initially with no edges. We initialize a dynamic trees data structure, with every node as a single-node tree, that will represent the fractional edges in $G'$ at all times. The data structure will keep track of the cost of each edge, flow along each edge, and availabilities in both directions.

We proceed to add edges $(u,v) \in E$ with fractional flow $f'(u,v)$ to $G'$ one by one. If $FindRoot(u) = FindRoot(v)$, a path from $u$ to $v$ exists and we have a fractional cycle (as in figure \ref{fig:tree1}). We find the sum of costs along the cycle, pick a direction in which cost is non-negative, and find the \emph{minimum availability} in that direction. Then we push this amount of flow around the cycle by adding to the $u$-$v$ path and $f'(u,v)$.

Now there must be at least one edge on the cycle with zero availability, and hence integral flow (as in figure \ref{fig:tree2}). If $f'(u,v)$ is integral, we simply don't add it to the dynamic trees. If any edge on the $u$-$v$ path is integral, a minimum availibility query will yield an edge $(x, y)$ with zero availability. We update $f'(x, y)$ with the flow stored in the data structure, and remove $(x,y)$ (figure \ref{fig:tree3}). If any integral edges remain, we can find them with a minimum availability queries along the $u$-$x$ path and the $y$-$v$ path, and remove them recursively.

\begin{figure}[H]
	\centering
	\begin{subfigure}{0.3\textwidth}
		\includegraphics[width=\textwidth]{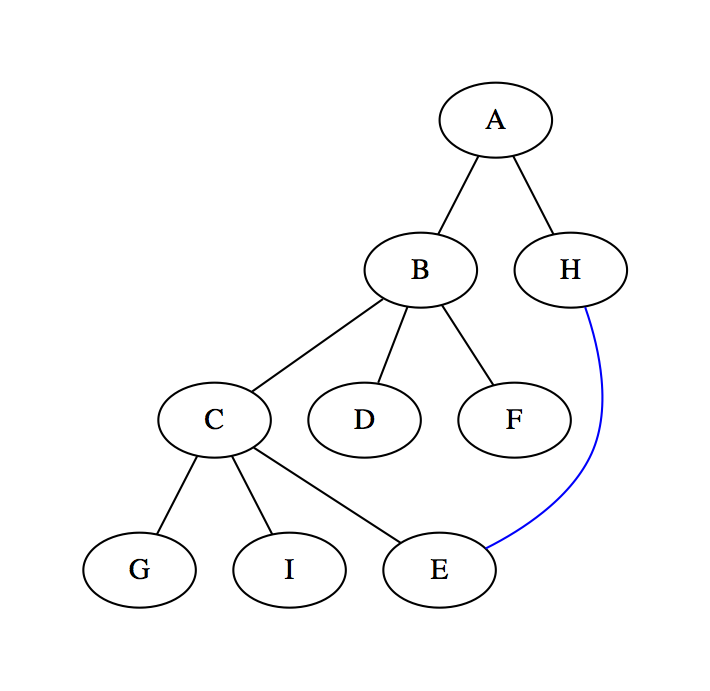}
		\caption{H and E are in the same tree, so adding (H,E) forms a cycle.}
		\label{fig:tree1}
	\end{subfigure}
	\hspace{10px}
	\begin{subfigure}{0.3\textwidth}
		\includegraphics[width=\textwidth]{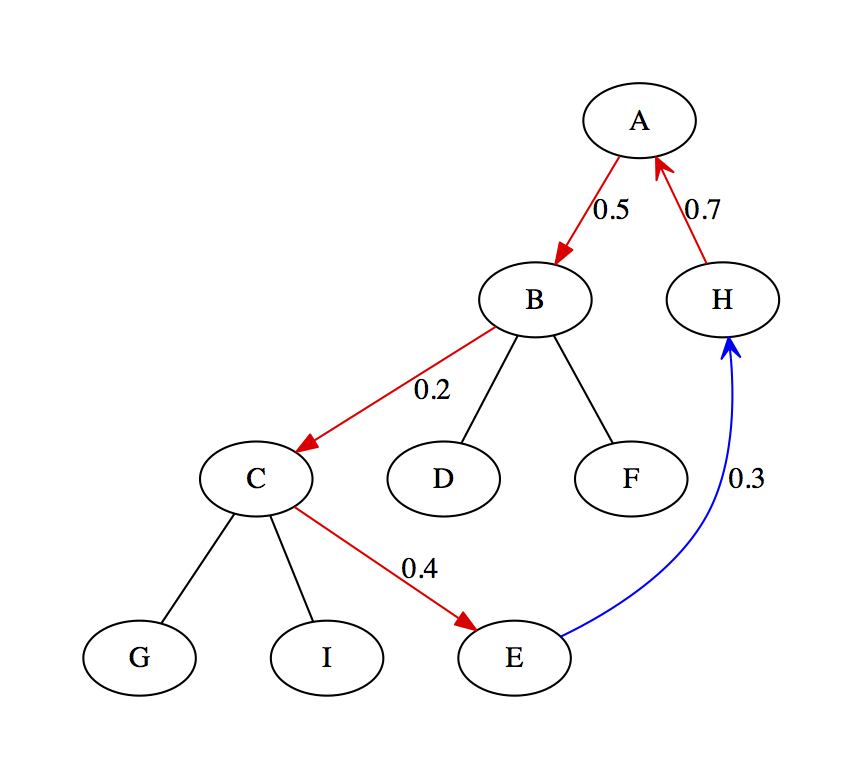}
		\caption{(B,C) has the smallest availability, so we push $0.2$ units around.}
		\label{fig:tree2}
	\end{subfigure}
	\hspace{10px}
	\begin{subfigure}{0.3\textwidth}
		\includegraphics[width=\textwidth]{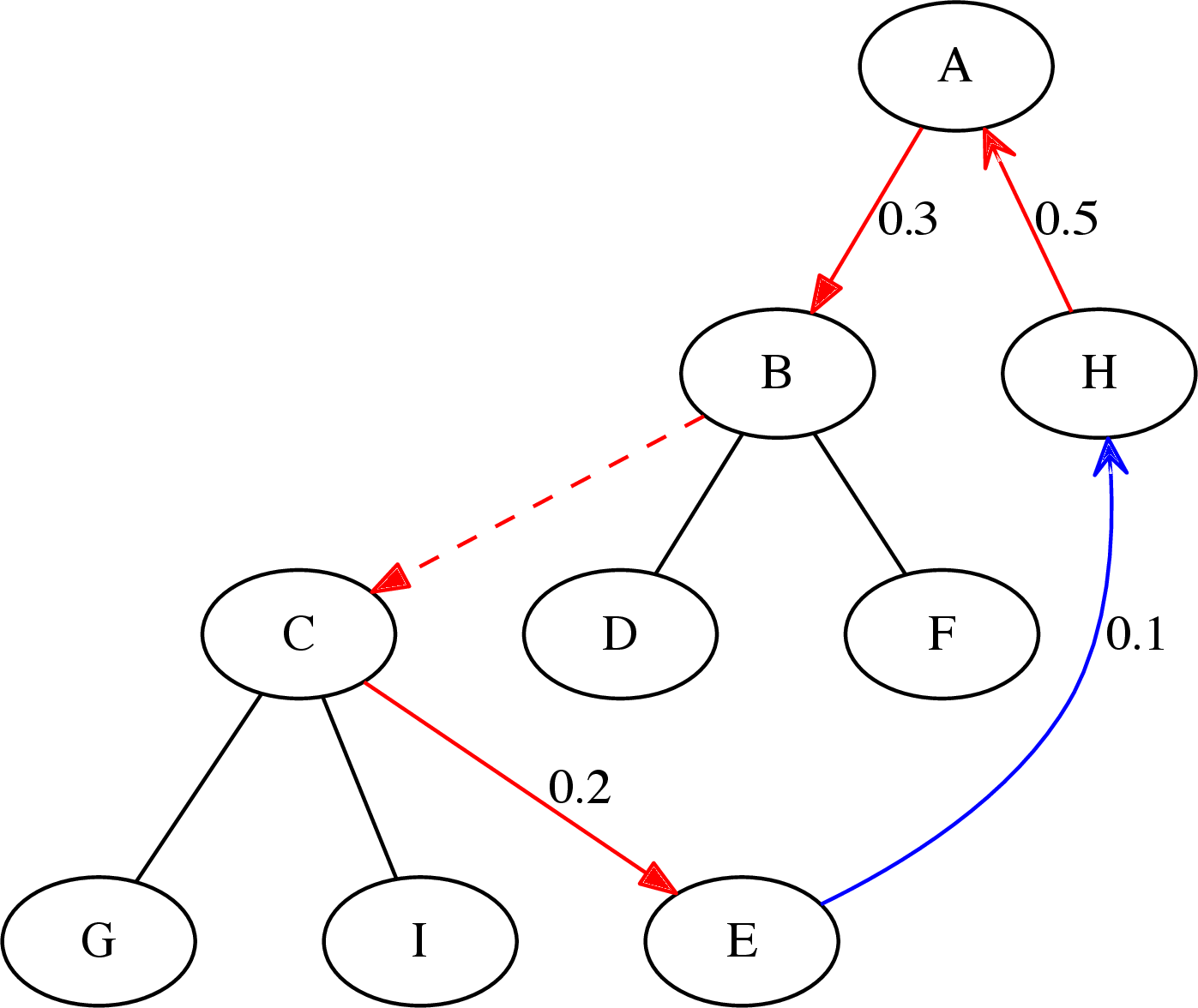}
		\caption{Decreasing availabilities, (B,C) becomes integral and is cut.}
		\label{fig:tree3}
	\end{subfigure}
	\caption{Cancelling a cycle in the $O(m \log n)$ algorithm.} \label{fig:mlogn}
\end{figure}

After adding all the edges to $G'$, there are no cycles of fractional edges, and so by \ref{lemma:nocycle} $f'$ must be an integral circulation. Further, the availability of each edge never moves below zero in either direction, so by \ref{lemma:availability} $f'$ satisfies all capacity constraints. Each dynamic trees operation used take $O(\log n)$ time, and we use a constant number of operations to add and remove each edge. As such, the total running time is $O(m \log n)$.

\clearpage
\section{Rounding in \texorpdfstring{$O(n^2)$}{O(n2)} }
\label{sec:n2}

In the $O(m \log n)$ algorithm, we processed edges in an arbitrary order. In this section we show how to process all edges from the same node in one batch, cancelling all the cycles introduced at once in $O(n)$ time per node.

We maintain a forest of processed nodes initially empty. To process a node $x$, we consider its fractional edges that are connected to trees in the forset. If $x$ has more than $1$ edge to a tree, then cycles will be formed. We describe a recursive algorithm, $Cancel(u)$, that removes cycles involving the subtree of $u$. After invocation, we guarantee that there is at most one path to $x$ through the subtree left.

To perform $Cancel(u)$, we first call $Cancel(v)$ for each child $v$ of $u$. Each call to a child will cancels cycles, and returns information about the single remaining path to $x$ if exists. If more than one of $u$ and its children have paths left to $x$, there will be a cycle consisting of two disjoint paths to $x$. Let the paths be $pathDown$ and $pathUp$, such that $pathDown$ is a $u$-$x$ path, $pathUp$ is an $x$-$u$ path.

At this point, we can cancel the cycle as before based off the aggregate information. In the case of costed flow rounding, we find $cost(pathDown) + cost(pathUp)$, and swap the two if the cost is positive. Then we will send flow $F$ equal to the minimum availability of $pathDown$ and $pathUp$ around the cycle, down $pathDown$, and up $pathUp$, and update the availabilities of each path. (Actually updating the flow values on each edge is deferred until later.)

After cancelling a cycle, at least one of $pathDown$ and $pathUp$ will now have zero availability in one direction. We remove such paths (we will remove the integral edges later). We repeat this until at most one path from $u$ to $x$ remains, and return the new aggregate information. In particular, if $p$ is the parent of $u$, we can compute the new aggregates for the path $p$-$x$ using information about the edge $(p, u)$ and the path $p$-$x$.

To send flow around all of the cycles through $x$ in $O(n)$ time, we push flow along paths in the tree in a batch operation. If we're at a node $u$ and want to push $F$ units of flow down some $pathDown$ and up some $pathUp$, we immediately update the edges between the paths and $x$, mark the last node of $pathDown$ with $-F$, and mark the last node of $pathUp$ with $F$. After running $Cancel$, we run a procedure $UpdateFlow(u)$ on the root of each tree that does the following:
\begin{itemize}
	\item Let $incomingFlow$ initially be the sum of the marked values of $u$.
	\item {
		For each child $v$ of $u$:
		\begin{itemize}
			\item Call $UpdateFlow(v)$. Let the result be $F_v$.
			\item Subtract $F_v$ from $f'(u, v)$.
			\item If $f'(u, v)$ is now integral, remove it from the forest of fractional edges.
			\item Add $F_v$ to $incomingFlow$.
		\end{itemize}
	}
	\item Return $incomingFlow$.
\end{itemize}

At this point, we can add $x$ to the forest, as no cycles remain. Each above procedure takes time linear in the size of the forest, so adding every node takes $O(n^2)$ time. At termination, the fractional edges must form a forest, so by \ref{lemma:nocycle}
the circulation is integral.

Figure \ref{fig:n2} illustrates some of the steps in processing a node $X$ with edges into the existing tree.
The figures \ref{fig:n2} (a)(b)(c) depict $Cancel(H)$, and the figures \ref{fig:n2} (d)(e) illustrate $Cancel(C)$.

\newpage
\newgeometry{left=2.9cm,right=3.1cm,top=1cm,bottom=3cm}
\begin{figure}[ht]
	\centering
	\begin{subfigure}{0.49\textwidth}
		\includegraphics[width=\textwidth]{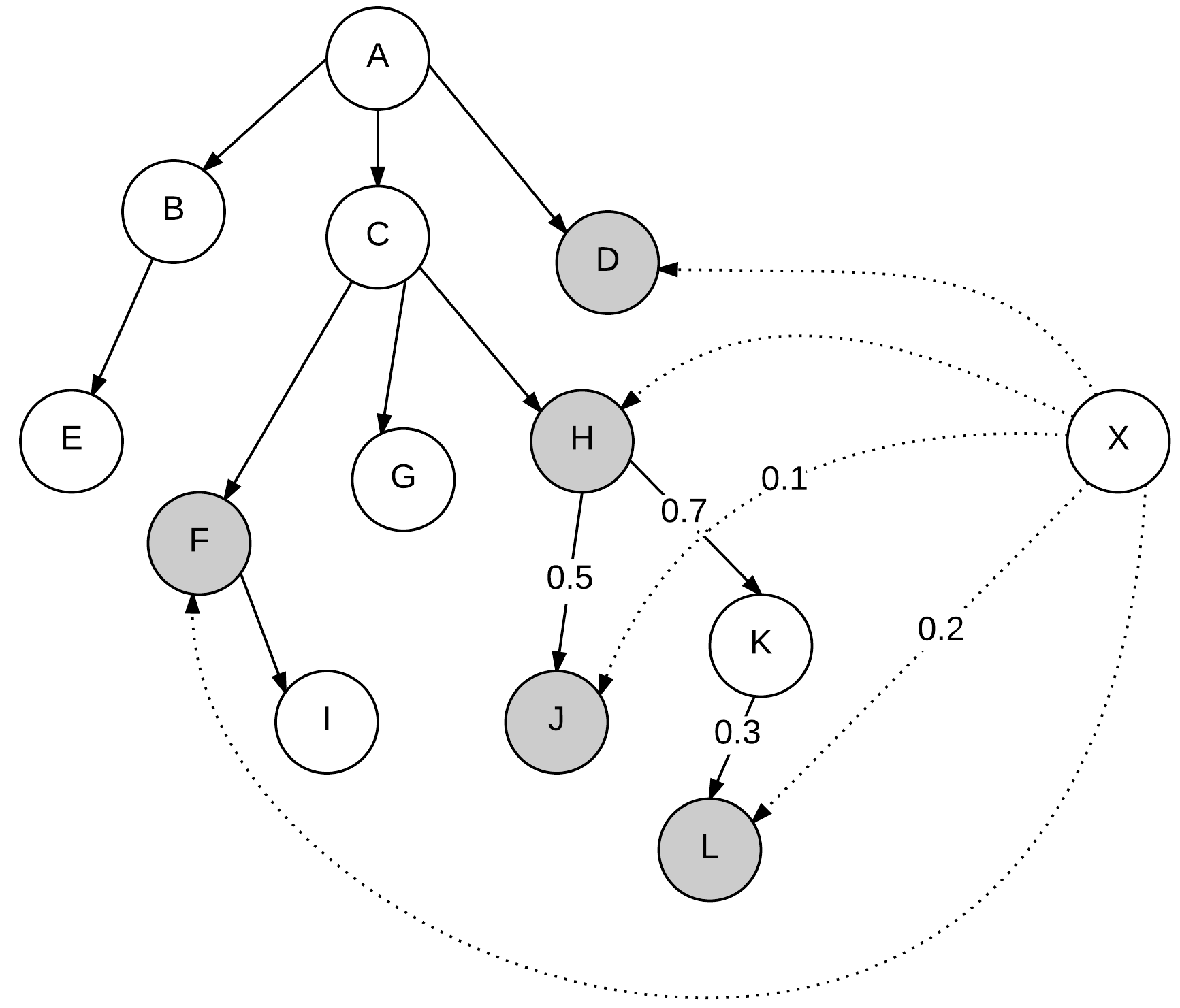}
		\caption{In $Cancel
	(H)$, cancelling the labelled cycle.}
		\label{fig:n2-1}
	\end{subfigure}
	\hspace{\fill}
	\begin{subfigure}{0.49\textwidth}
		\includegraphics[width=\textwidth]{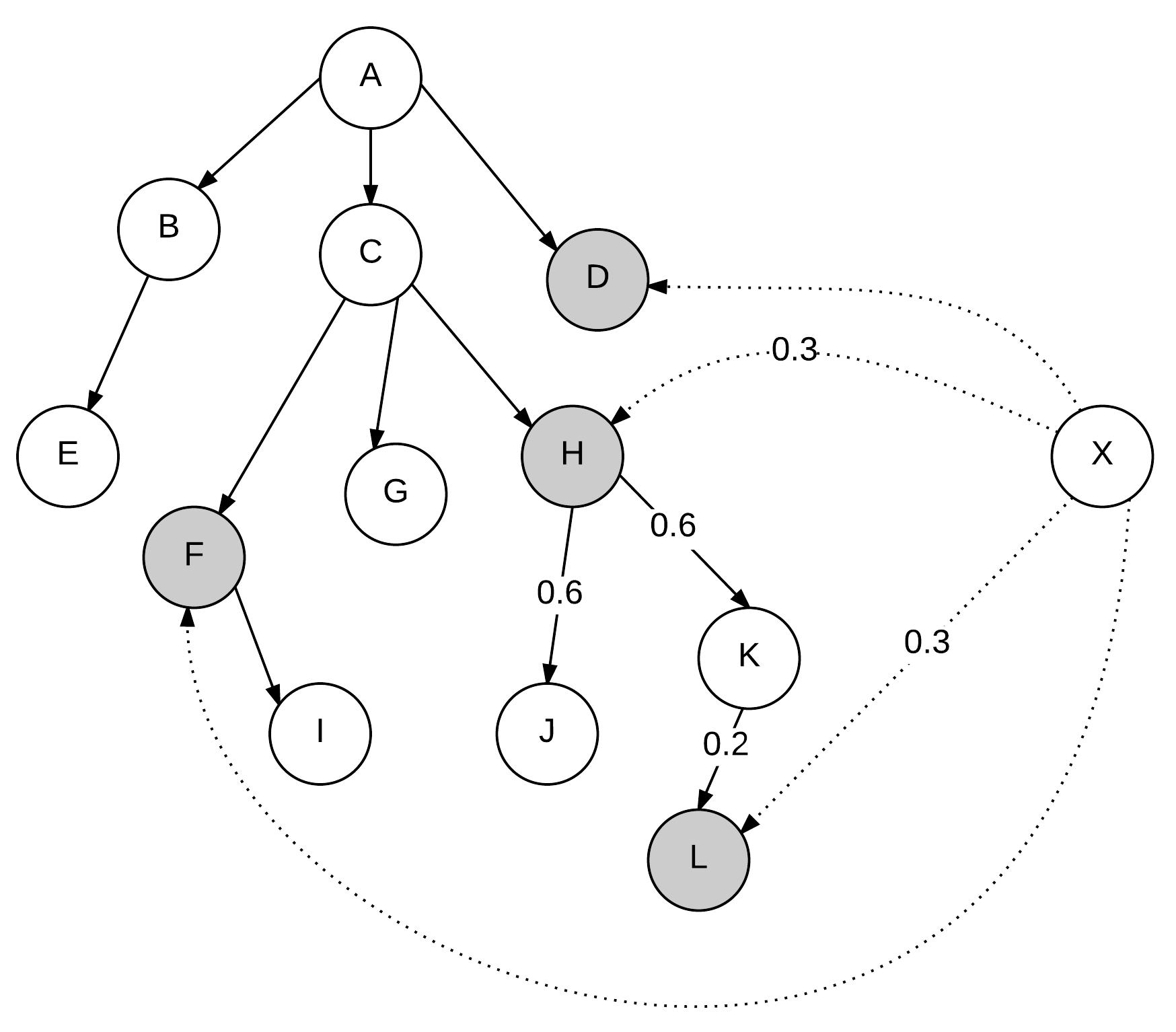}
		\caption{In $Cancel
	(H)$, cancelling the cycle through $L$.}
		\label{fig:n2-2}
	\end{subfigure}

	\vspace{5pt}

	\begin{subfigure}{0.49\textwidth}
		\includegraphics[width=\textwidth]{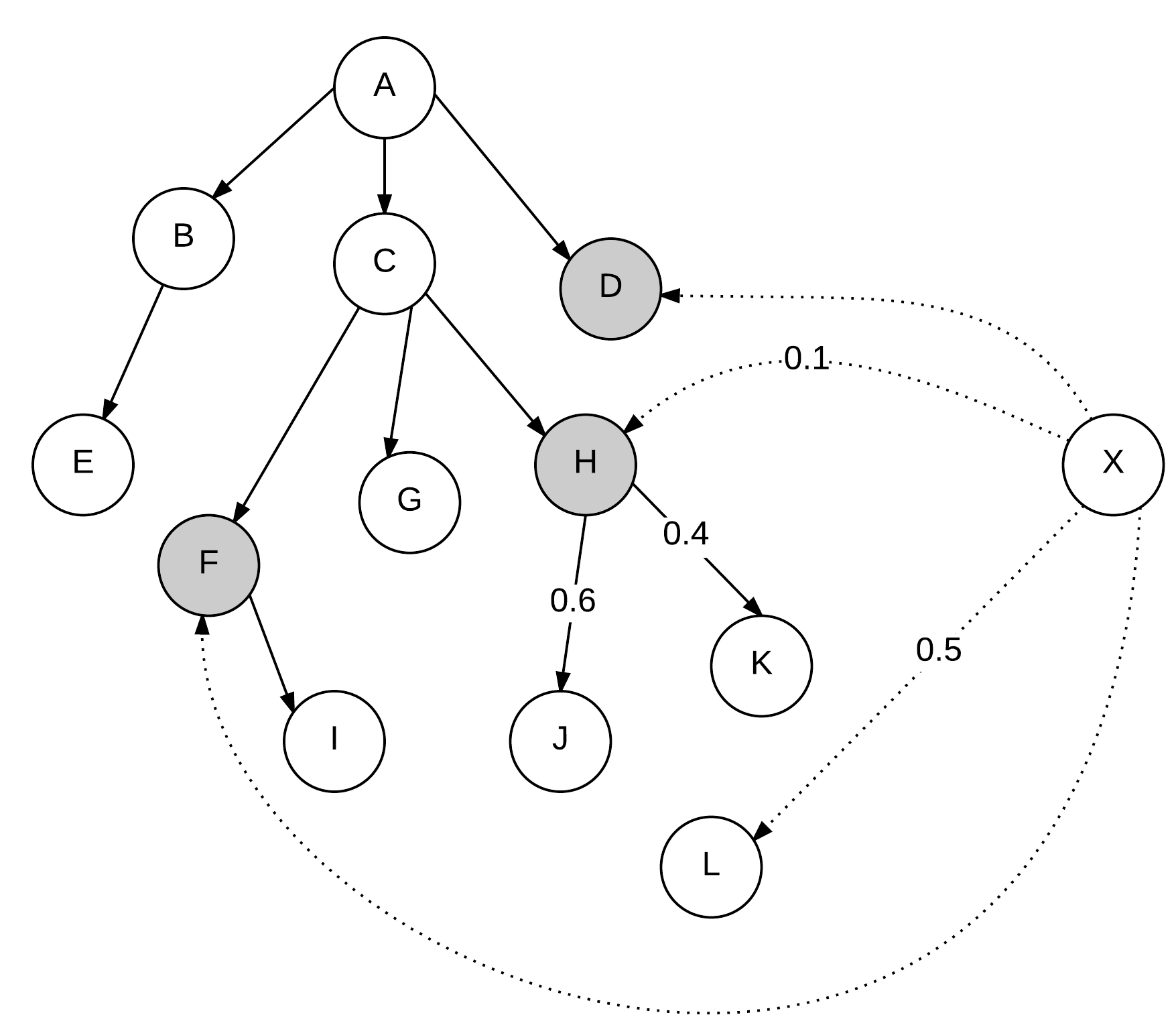}
		\caption{The result when $Cancel
	(H)$ returns.}
		\label{fig:n2-3}
	\end{subfigure}
	\hspace{\fill}
	\begin{subfigure}{0.49\textwidth}
		\includegraphics[width=\textwidth]{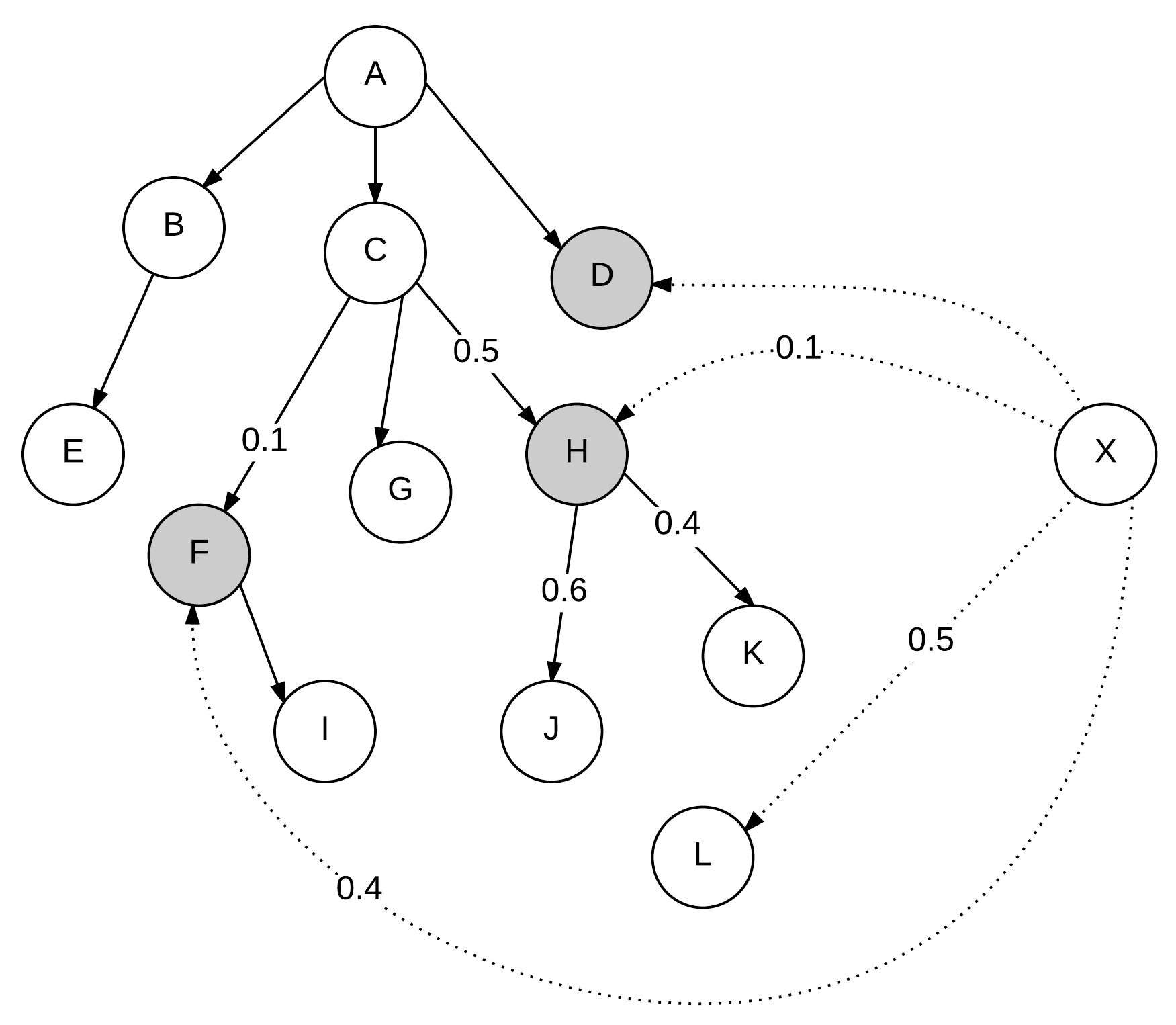}
		\caption{$Cancel
	(C)$ now cancels the cycle through $F$ and $H$.}
		\label{fig:n2-4}
	\end{subfigure}

	\vspace{5pt}

	\begin{subfigure}{0.49\textwidth}
		\includegraphics[width=\textwidth]{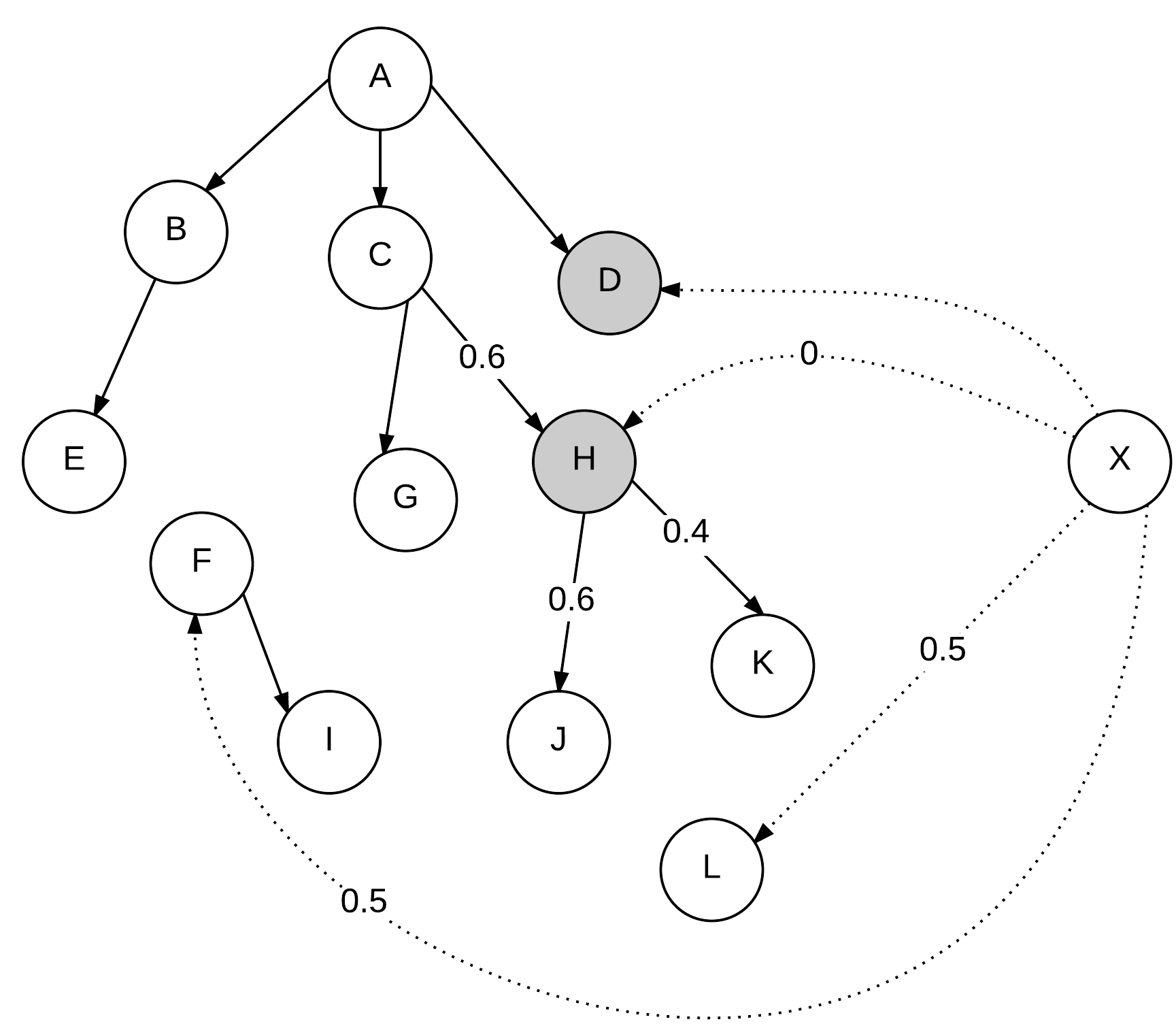}
		\caption{The result when $Cancel
	(C)$ returns.\\ Note $available(X,H) = 0$.}
		\label{fig:n2-5}
	\end{subfigure}

	\caption{Illustration of the $O(n^2)$ algorithm.}\label{fig:n2}
\end{figure}
\restoregeometry

\clearpage
\section{Rounding in  \texorpdfstring{$O(m \log \frac{n^2}{m})$}{O(m log (n2/m))}}
\label{sec:mlogn2m}

Combining the dynamic trees of the $O(m \log n)$ algorithm and the batch processing of the $O(n^2)$ algorithm, it's possible to achieve an $O(m \log \frac{n^2}{m})$ algorithm for cycle canceling.

The basic idea is to represent the forest of processed nodes maintained by the $O(n^2)$ algorithm as trees of \emph{clusters}, where each cluster has a limited number of nodes, say less than $2k$, and is represented using the dynamic trees data structure. We call a tree of clusters a \emph{primary tree}, and will represent the primary trees by storing parent pointers in the root of each cluster. Each pointer gives the parent \emph{node}, from which we can find the parent cluster. By trading off the size of the clusters against that of the primary trees we achieve the stated speedup.

To keep the size of clusters in the required range, whenever we walk up a primary tree we will merge clusters with their parents if both contain less than $k$ nodes. After such an operation, the following key properties hold:

\begin{enumerate}
    \item Each cluster has at most 2k nodes.
    \item For any two adjacent clusters, one must have more than $k$ nodes.
    \item The number of internal (non-leaf) clusters in the primary trees is $O(n/k)$.
	\begin{proof}
		Consider decomposing the primary trees into sets of paths with one leaf cluster per path.
		This can be done for a particular tree by: numbering leaf clusters from left to right; choosing the first path to be from the first leaf to the root; choosing successive paths to be from the next leaf cluster to its lowest common ancestor with the previous leaf cluster.

		Indexing all the paths found this way, let $x_i$ be the number of internal clusters in the $i$-th path (so $x_i + 1$ will be the path length). Let $n_i$ be the total number of real nodes (not clusters) contained in the $i$-th path. For every two adjacent clusters on a path, there are at least $k$ nodes. Thus $n_i \geq \floor{(x_i+1)/2}k \geq x_ik/2$. So $\sum{x_i} \leq n/k$, and thus the number of internal clusters of the primary trees is $O(n/k)$.
	\end{proof}
\end{enumerate}

Now we are in a position to describe the algorithm. As in the $O(n^2)$ algorithm, we proceed to add nodes one at a time to a processed set. Suppose we are adding the node $x$, and the number of edges from $x$ to the processed nodes is $d$.

\begin{description}
\item[Step 1] \emph{Merge clusters.}

We identify all the clusters with edges to $x$, and follow any parent pointers upward to discover the primary trees in which we might have introduced cycles. As we follow parent pointers, we merge adjacent clusters if both have less than $k$ nodes. After merges, we will have reached $O(d + n/k)$ clusters in the search - at most $O(n/k)$ internal nodes and $d$ leaves. If there were $s$ merges, then this takes $O((s + d + n/k)\log k)$ time (as we need $O(\log k)$ time to merge and traverse primary tree edges).

\item[Step 2] \emph{Process cycles within primary trees.}

Now we will use the DFS approach in our $O(n^2)$ algorithm on the primary trees found in Step 1. Note that each cluster in a primary tree can now have multiple edges to $x$, but the approach extends naturally. The algorithm needs to traverse each primary tree involved, and cancel one cycle for each of the $d$ edges. Canceling each cycle will use a constant number of path queries within the clusters, as will moving between nodes in the primary tree. As each query takes $O(\log k)$ time, this takes $O((d + n/k)\log k)$ time overall. Figure \ref{fig:mlogn2m} illustrates this step.

\item[Step 3] \emph{Update parent pointers and link clusters to $x$.}

After Step 2, we've updated the flow on all involved edges, and removed the parent pointers and edges in clusters that have integral flow. No cycles remain, as $x$ has at most one edge to each tree. To construct the new forest, we initialize a new cluster for $x$ and add parent pointers to it from the nodes $x$ is connected to. Now $x$ will be the new root for several trees, and we will need to reverse some of the parent pointers. These changes will involve changing the cluster roots of $O(d)$ clusters (those that $x$ remains connected to) and reversing at most $O(d + n/k)$ parent pointers, taking $O((d + n/k) \log k)$ time.
\end{description}

Overall, we add a node $x$ with degree $d$ and $s$ merges in $O((s+d+n/k) \log k)$ time. The total number of merges (over all nodes) and sum of degrees are $O(m)$, so the cost of adding every node is $O((m + n^2/k)\log k)$. Choosing $k = \frac{n^2}{m}$, this is $O(m \log \frac{n^2}{m})$.

\begin{figure}[H]
	\centering
	%
	\begin{subfigure}{\textwidth}
		\includegraphics[width=0.95\textwidth]{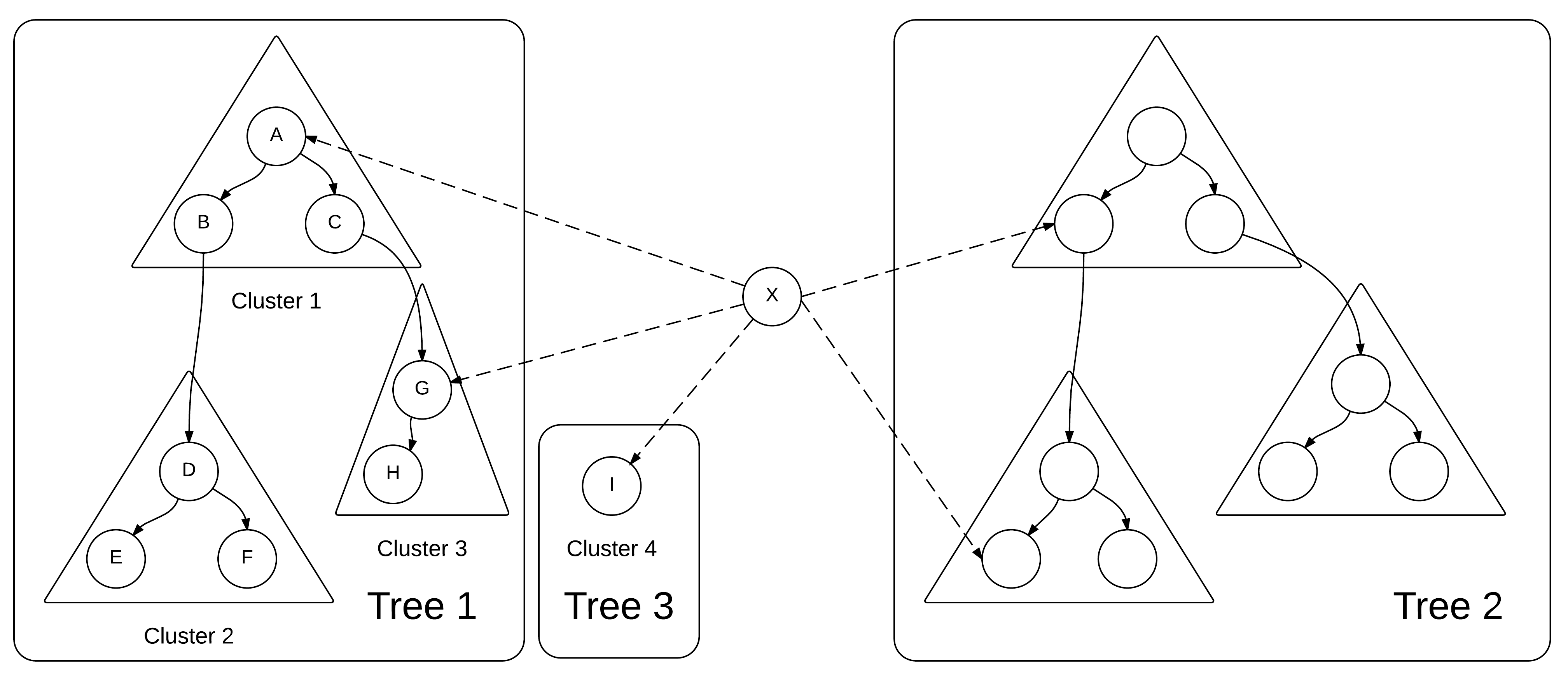}
		\caption{Before processing cycles within each tree.}
		\label{fig:step2}
	\end{subfigure}

	\begin{subfigure}{\textwidth}
		\includegraphics[width=0.95\textwidth]{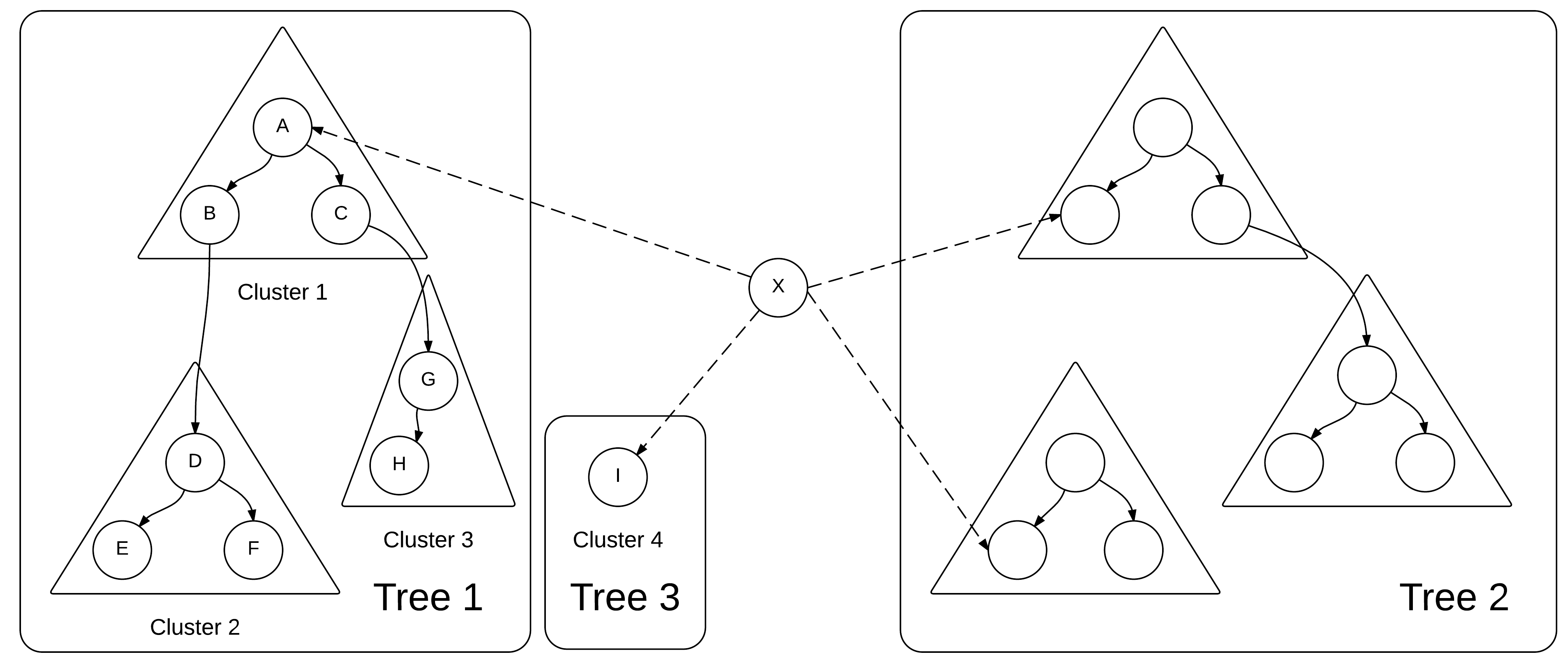}
		\caption{After cancelling cycles, Tree 2 is split, and $x$ has at most one edge to each tree.}
		\label{fig:step3}
	\end{subfigure}
	\caption{Step 2 of the $O(m \log (n^2/m))$ algorithm.}
	\label{fig:mlogn2m}
\end{figure}

\section{Acknowledgement}
We thank Richard Peng for introducing the authors to the problem of flow rounding, communication of the $O(m \log n)$ algorithm, and fruitful discussions.

\nocite{*}
\bibliographystyle{apalike}
\bibliography{soda16}

\begin{thebibliography}{}

\bibitem[Cohen, 1995]{Cohen1995}
Cohen, E. (1995).
\newblock {Approximate max-flow on small}.
\newblock {\em Proceedings., 33rd Annual Symposium on Foundations of Computer
  Science}, 24(June):579--597.

\bibitem[Fleischer and Orlin, 2000]{Fleischer00}
Fleischer, L.~K. and Orlin, J.~B. (2000).
\newblock {Optimal Rounding of Instantaneous Fractional Flows Over Time}.
\newblock {\em SIAM Journal on Discrete Mathematics}, 13(2):145--153.

\bibitem[Goldberg and Tarjan, 1989]{Goldberg1989}
Goldberg, A.~V. and Tarjan, R.~E. (1989).
\newblock {Finding minimum-cost circulations by canceling negative cycles}.
\newblock {\em Journal of the ACM}, 36(4):873--886.

\bibitem[Lee et~al., 2013]{Lee2013}
Lee, Y.~T., Rao, S., and Srivastava, N. (2013).
\newblock {A New Approach to Computing Maximum Flows using Electrical Flows}.

\bibitem[Madry, 2013]{Madry2013}
Madry, A. (2013).
\newblock {Navigating central path with electrical flows: From flows to
  Matchings, and back}.
\newblock In {\em Proceedings - Annual IEEE Symposium on Foundations of
  Computer Science, FOCS}, pages 253--262.

\bibitem[Raghavan and Thompson, 1985]{Raghavan1985}
Raghavan, P. and Thompson, C.~D. (1985).
\newblock {Provably good routing in graphs: regular arrays}.
\newblock In {\em Proceedings of the seventeenth annual ACM symposium on Theory
  of computing - STOC '85}, pages 79--87, New York, New York, USA. ACM Press.

\bibitem[Raghavan and Tompson, 1987]{Raghavan1987}
Raghavan, P. and Tompson, C.~D. (1987).
\newblock {Randomized rounding: A technique for provably good algorithms and
  algorithmic proofs}.
\newblock {\em Combinatorica}, 7(4):365--374.

\bibitem[Sleator and Tarjan, 1981]{Sleator1981}
Sleator, D.~D. and Tarjan, R.~E. (1981).
\newblock {A data structure for dynamic trees}.
\newblock In {\em Proceedings of the thirteenth annual {ACM} symposium on
  Theory of computing - {STOC} {\textquotesingle}81}. {ACM} Press.

\end{thebibliography}

\end{document}